\documentclass[runningheads]{llncs}

\usepackage{todonotes}
\usepackage{quantikz}
\usepackage{physics}
\usepackage{amssymb}
\usepackage{amsmath}
\usepackage{mathtools}
\usepackage{float}
\usepackage[T1]{fontenc}
\usepackage{graphicx}
\usepackage[hidelinks]{hyperref}
\hypersetup{
    colorlinks,
    linkcolor={red!50!black},
    citecolor={blue!50!black}
}
\usepackage{orcidlink}
\usepackage{color}

\usepackage{nicematrix}
\newenvironment{smallpmatrix}
  {\left(\begin{smallmatrix}}
  {\end{smallmatrix}\right)}
\usepackage{wrapfig}
\usepackage{arydshln}
\usepackage{algorithm}
\usepackage[noend]{algpseudocode}
\usepackage{enumitem}
\usepackage{booktabs}
\usepackage[capitalise]{cleveref}
\crefname{section}{Sec.}{Secs.}
\Crefname{section}{Section}{Sections}
\crefname{definition}{Def.}{Defs.}
\Crefname{definition}{Definition}{Definitions}
\crefname{example}{Ex.}{Ex.}
\Crefname{example}{Example}{Examples}
\usepackage{xspace}
\def\paulilim{PauliLim\xspace}

\usepackage{listings}
\usepackage{xcolor}
\definecolor{codegreen}{rgb}{0,0.6,0}
\definecolor{codegray}{rgb}{0.5,0.5,0.5}
\definecolor{codepurple}{rgb}{0.58,0,0.82}
\definecolor{backcolour}{rgb}{0.95,0.95,0.92}

\lstdefinestyle{mystyle}{
    backgroundcolor=\color{backcolour},
    commentstyle=\color{codegreen},
    keywordstyle=\color{magenta},
    numberstyle=\tiny\color{codegray},
    stringstyle=\color{codepurple},
    basicstyle=\ttfamily\footnotesize,
    breakatwhitespace=false,
    breaklines=true,
    captionpos=b,
    keepspaces=true,
    numbers=left,
    numbersep=5pt,
    showspaces=false,
    showstringspaces=false,
    showtabs=false,
    tabsize=2
}
\usepackage{tikz}
\usetikzlibrary{shapes,backgrounds,calc,arrows,automata}
\usetikzlibrary{graphs,quotes}
\usetikzlibrary{arrows.meta}
\usetikzlibrary{decorations.pathreplacing}
\usetikzlibrary{shapes,backgrounds,calc,arrows,positioning}
\usetikzlibrary{decorations.pathreplacing,calligraphy}
\tikzset{main/.style={draw,circle}}
\tikzset{leaf/.style={draw,minimum width=1.2em,minimum height=1.2em}}
\tikzset{e0/.style={draw,->,dotted,>=latex}}
\tikzset{e1/.style={draw,->,>=latex}}

\usepackage{pgf}
\newlength{\pgfcalcparm}
\newlength{\pgfcalcparmm}

\usepackage{arydshln}

\everymath=\expandafter{\the\everymath\displaystyle}

\DeclareRobustCommand{\lnode}[5][]{%
  ~\raisebox{-.mm}{%
    \pgftext{\settowidth{\global\pgfcalcparm}{\scriptsize $\,\,\,#2\,\,\,$}}%
    \pgftext{\settowidth{\global\pgfcalcparmm}{\scriptsize $\,\,\,#4\,\,\,$}}%
  \tikz{%
  \vspace{-1mm}%
    \node[state,inner sep=0pt,minimum size=10pt] (v){\scriptsize $#1$};%
    \node[state,inner sep=0pt,minimum size=10pt,left=\pgfcalcparm of v](v0){\scriptsize $#3$};%
    \draw[dotted, ->] (v) to node[above,pos=.45]{\scriptsize $#2$} (v0);%
    \node[state,inner sep=0pt,minimum size=10pt,right=\pgfcalcparmm of v](v1){\scriptsize $#5$};%
    \draw[->] (v) to node[above,pos=.45]{\scriptsize $#4$} (v1)
  }
  }
}

\DeclareRobustCommand{\ledge}[3][]{%
  \pgftext{\settowidth{\global\pgfcalcparm}{\scriptsize $\,\,\,#2\,\,\,$}}%
  \raisebox{-.8mm}{%
  \tikz{%
    \node[inner sep=0pt] (x){$#1\,\,$};%
    \node[state,inner sep=0pt,minimum size=10pt,right=\pgfcalcparm of x](v){\scriptsize $#3$};%
    \draw[->] (x) to node[above,pos=.5]{\scriptsize $\,#2\,\,$} (v);%
  }%
  }%
}

\newcommand{\transpose}{\intercal}

\newcommand{\low}[1]{\textsf{low}(#1)}
\newcommand{\high}[1]{\textsf{high}(#1)}

\newcommand{\qoldder}{{QolDDer}}
\def\randclif{{\sc RandCliff}}
\def\randcliff{{\sc RandCliff}}
\def\mqtbench{{\sc MQT-Bench}}
\newcommand{\tim}[1]{}
\newcommand{\juul}[1]{}
\newcommand{\sebastiaan}[1]{}
\newcommand{\paragraphsentence}[1]{}

\newcommand\defmath[2]{\newcommand#1{\ensuremath{#2}\xspace}}
\defmath\before{\prec}
\defmath\beforeq{\preccurlyeq}

\defmath\LIM{\textsc{LIM}}
\defmath\glim{G\text -\LIM}
\newcommand\Pauli{\textsc{Pauli}\xspace}
\newcommand\pauli{\Pauli}

\newcommand{\leftnospace}{\mathopen{}\mathclose\bgroup\left}
\newcommand{\rightnospace}{\aftergroup\egroup\right}
\newcommand{\argmin}{\operatornamewithlimits{argmin}}
\newcommand{\unit}{I}
\newcommand{\signless}[1]{\textsf{word}\leftnospace(#1\rightnospace)}
\newcommand{\revbin}[1]{\textsf{revbin}\leftnospace(#1\rightnospace)}

\newcommand{\stab}[1]{\text{Stab}(#1)}

\usepackage[firstpageonly=true,angle=0,vpos=.147\paperheight,hpos=.39\linewidth,vanchor=t,hanchor=l]{draftwatermark}
\SetWatermarkText{%
\raisebox{-4cm}
{\begin{minipage}{\linewidth}\centering\href{https://doi.org/10.5281/zenodo.19763549}{\includegraphics[width=12.5mm]
{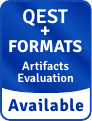}}\hfill\includegraphics[width=12.5mm]
{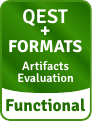}\end{minipage}}%
}

\begin{document}

\title{Faster algorithm for achieving minimal-size quantum decision diagrams}

\author{Juul Sanders
\inst{1}
\orcidlink{0009-0008-5009-1197}
\and
Sebastiaan Brand
\inst{1}
\orcidlink{0000-0002-7666-2794}
\and
Arend-Jan Quist
\inst{2}
\orcidlink{0000-0002-6501-2112}
\and
Tim Coopmans
\inst{1}
\orcidlink{0000-0002-9780-0949}
}

\authorrunning{J. Sanders et al.}

\institute{Delft University of Technology, Delft, The Netherlands
\and
Leiden University, Leiden, The Netherlands\\
\email{\{juul.sanders,s.o.brand,t.j.coopmans\}}\email{@tudelft.nl}\\
\email{a.quist}\email{@liacs.leidenuniv.nl}
}

\maketitle  

\begin{abstract}
\looseness-1
The decision diagram (DD) data structure enables fast linear-algebra calculations by bringing vectors into a normal form and subsequently merging equivalent ones, yielding a minimally-sized DD modulo the equivalence relation.
A fruitful application area is quantum-circuit simulation, where the vectors represent quantum states.
The Local Invertible Map Decision Diagram (LIMDD) type merges LIM-equivalent (typically Pauli-gate equivalent) vectors, 
can efficiently simulate Clifford circuits as well as some high-T-count circuits, 
and has theoretically been proven exponentially faster for simulation than other well-developed data structures, including other common DD variants. 
However, these exponential advantages have not fully materialized yet in existing implementations, for which the normal-form procedure, which is a highly complex algorithm, is either absent or only partially implemented.
We here present a novel normal-form algorithm for Pauli-LIMDDs, achieving a worst-case speedup from $O(n^3)$ to $O(n^2)$ for an $n$-qubit DD node with a single child node while keeping the $O(n^3)$ run time in case of two distinct children nodes.
We implement the algorithm as part of QolDDer, our Pauli-LIMDD simulator for quantum circuits, written from scratch in C/C++.
The implementation realizes the theoretically-proven advantages of Pauli-LIMDDs on Clifford circuits, is significantly faster than the existing LIMDD simulators on such circuits, and on a public quantum-circuit data set often outperforms them by an order of magnitude.
In the future, we envision that our work will enable further application and development of LIMDD variants, not only for quantum design tasks, but also for analysis of linear-algebra-based systems in general.
\keywords{Quantum computing  \and Decision diagrams}
\end{abstract}

\section{Introduction}
Quantum computing promises computational advantages in domains such as cryptography~\cite{shor1994algorithms}, finance~\cite{orus2019quantum}, optimization~\cite{harwood2021formulating}, and the simulation of quantum physics~\cite{georgescu2014quantum} and chemistry~\cite{cao2019quantum}.
For quantum computers to scale to practically useful tasks, various design challenges need to be solved, such as quantum circuit compilation, performance comparisons of quantum algorithms, and evaluating and mitigating the effects of noise.
An important tool to tackle these challenges is the simulation of quantum circuits on classical computers~\cite{ardeshir2014verification,hong2022equivalence,thanos2023fast}.

Among the algorithms and data structures for quantum circuit simulation are decision diagrams (DD)~\cite{bryant1992symbolic}, which have been proven to be successful in the simulation of quantum circuits and related tasks~\cite{viamontes2004high,miller2006qmdd,samoladas2008improved,zulehner2018advanced,tsai2021bit,wei2022accurate,vinkhuijzen2023limdd,sistla2023symbolic,hong2025limtdd,brand2025qsylvan}.
In quantum computing, states and transformations of states can be represented as pseudo-Boolean functions $f: \{0, 1\}^n \rightarrow \mathbb{C}$, which are exponentially large in the number of qubits.
Decision diagrams are a compact alternative that represents pseudo-Boolean functions as directed graphs. The nodes of these graphs represent sub-functions $f_{\vec{a}} : \vec{x} \mapsto f(\vec{a}, \vec{x})$, for $\vec{a} \in \{0, 1\}^m, m \leq n$.
DDs achieve succinctness due to \emph{canonicity}: by converting each node to a normal form, two nodes representing the same function are \emph{merged}, avoiding storing multiple sub-functions if they are identical.
DDs that offer stronger node merging also exist. For example, in quantum multiple-value decision diagrams (QMDDs~\cite{miller2006qmdd}, i.e., factored edge-valued DDs~\cite{tafertshofer1997factored} with complex edge values), two nodes can be merged if they represent the same sub-function up to some scalar multiple.

Local-Invertible Map Decision Diagrams (LIMDDs) enable even stronger node merging: two nodes are merged if they represent the same sub-function up to some local invertible map (LIM), Kronecker products of invertible $2\times 2$ matrices. Pauli-LIMDDs, where the LIMs consist of matrices from the Pauli group (a set of matrices well known in quantum physics), were proven to be exponentially faster for quantum-circuit simulation than QMDDs and Algebraic Decision Diagrams~\cite{bahar1997algebric,fujita1997multi}, and can, for specific circuits, also outperform common techniques such as matrix product states and Clifford+T simulation~\cite{vinkhuijzen2023limdd,vinkhuijzen2026knowledge}.

Practical implementations of LIMDDs have so far failed to fully realize their theoretical advantages, showcasing the difficulty of their (correct) implementation. The first cause is the complicated $\mathcal{O}(n^3)$ run time merge rule algorithm, solving a group-theoretic task: in practice, a fast implementation is needed to outperform the $\mathcal{O}(1)$ run time merging for QMDDs. 
In addition to this, canonicity, the crucial property for succinct and efficiently updatable decision diagrams, requires careful handling of many (edge) cases.
To make matters worse, bugs in the canonicity implementation are hard to catch, as a non-canonical LIMDD can represent the correct quantum state (and thus pass all typical software tests) while still not being optimal in node number.
Indeed, neither of the two existing proof-of-principle LIMDD implementations is fully canonical. First, MQT-LIMDD~\cite{vinkhuijzen2023efficient}, built as an extension of the QMDD-simulator MQT-DDSIM~\cite{zulehner2018advanced}, does not achieve minimal size in all cases.
Second, LimTDD~\cite{hong2025limtdd}, which implements a generalization of Pauli-LIMDDs, deliberately does not include the canonicity property, voiding any guarantees against (exponential) blow-ups.

In this work, we present a novel algorithm for the Pauli-LIMDD merge rule, implemented as an open-source stand-alone library {\paulilim}~\cite{repopaulilim}. The algorithm has less overhead and achieves a run time improvement from $\mathcal{O}(n^3)$ to $\mathcal{O}(n^2)$ for $n$-qubit nodes with a single child (one of the cases in which DDs achieve a succinctness advantage). We integrate this new algorithm into {\qoldder}, an open-source~\cite{repoqoldder} from-scratch implementation of Pauli-LIMDDs. 
In an empirical evaluation, we compare {\qoldder} against the two previous LIMDD implementations~\cite{vinkhuijzen2023efficient,hong2025limtdd}, as well as against the well-established QMDD implementation MQT-DDSIM~\cite{zulehner2018advanced}. These empirical results corroborate that LIMDDs are able to achieve exponential advantages over QMDDs and underline that LIMDDs require canonicity to realize these advantages.
Our work will enable further application and development of LIMDD variants, not only for quantum design tasks but also for analysis of linear-algebra-based systems in general.

\section{Preliminaries}
\label{sec:prelims}

\subsection{Quantum computing}
\label{sec:prelims-qc}

We briefly introduce quantum computing; see~\cite{nielsen2000quantum} for a full overview.
The state of a single quantum bit (qubit) is a complex-valued 2-vector of unit norm: either $\ket{0} = \begin{bmatrix} 1 & 0 \end{bmatrix}^{\transpose}$ or $\ket{1} = \begin{bmatrix} 0 & 1 \end{bmatrix}^{\transpose}$, or a so-called \emph{superposition} of both, $\begin{bmatrix} \alpha_0 & \alpha_1 \end{bmatrix}^{\transpose} = \alpha_0\ket{0} + \alpha_1 \ket{1}$ where $\alpha_0, \alpha_1$ are complex numbers satisfying $|\alpha_0|^2 + |\alpha_1|^2 = 1$ and where $|.|$ denotes the modulus of a complex number.
The joint state of two qubits,  $\ket{\psi_A}$ and $\ket{\psi_B}$, is obtained from the Kronecker product of their individual states;
$\begin{bmatrix} \alpha_0 & \alpha_1 \end{bmatrix}^\transpose \otimes \begin{bmatrix} \beta_0 & \beta_1 \end{bmatrix}^\transpose = \begin{bmatrix} \alpha_0 \beta_0 & ~\alpha_0 \beta_1 & ~\alpha_1 \beta_0 & ~\alpha_1 \beta_1 \end{bmatrix}^\transpose$.
In general, the state of $n$ qubits can be represented by a vector of unit norm, $\ket{\psi} \in \mathbb{C}^{2^n}$, and can be written in the \emph{Shannon decomposition} $\ket{0} \otimes \ket{\psi_0} + \ket{1} \otimes \ket{\psi_1}$, where $\ket{\psi_0}, \ket{\psi_1}$ are (unnormalized) $(n-1)$-qubit states.

Quantum gates are (reversible) operations that map $n$-qubit states to $n$-qubit states, represented as $2^n \times 2^n$ unitary matrices.
Several common gates are listed below.
Two sets of gates that are particularly relevant to this work are the Pauli gates $\{\unit, X, Y, Z{=}P(\pi)\}$, and the Clifford gates, which are generated from $\{H, P(\tfrac{\pi}{2}), \mathit{CX}\}$ under matrix multiplication and Kronecker products.
\begin{align*}
    {\unit}{=}\begin{bmatrix*}[r] 1 & 0\\ 0 & 1 \end{bmatrix*}~  
    X{=}\begin{bmatrix*}[r] 0 & 1\\ 1 & 0 \end{bmatrix*}~
    Y{=}\begin{bmatrix*}[r] 0 & \text{-}i\\ i & 0 \end{bmatrix*}~ 
    H{=}\frac{1}{\sqrt{2}} \begin{bmatrix*}[r] 1 & 1\\ 1 & \text{-}1 \end{bmatrix*}~
    P(\theta){=}\begin{bmatrix*}[r] 1 & 0\\ 0 & e^{i\theta} \end{bmatrix*}~
    \mathit{CX}{=}\begin{bmatrix} 1&0&0&0\\0&1&0&0\\0&0&0&1\\0&0&1&0\end{bmatrix}
\end{align*}
The quantum state after the application of a gate is obtained through matrix-vector multiplication. 
When a gate is applied to a state with a greater number of qubits, the gate matrix is extended using Kronecker products with $\unit$, so on a 3-qubit state $\ket{\psi}$, applying $X$ on the second qubit can be computed as $\unit\otimes X \otimes \unit \ket{\psi}$.

\looseness=-1
A quantum circuit is a sequence of gates (and measurements) performed on a register of qubits. A straightforward way to simulate quantum circuits is to compute the final state vector by performing matrix-vector multiplication for all gates in the circuit, starting from the initial state. Desired information, such as measurement outcomes, can then be extracted from this final state. The limiting factor of this approach is the exponential ($2^n$) size of $n$-qubit state vectors.

We finish with notions from Pauli stabilizer algebra~\cite{aaronson2004improved,nielsen2000quantum}.
By an $n$-qubit Pauli word $P\in \Pauli_n$, we mean a length-$n$ Kronecker product of the Pauli matrices, e.g. $X\otimes Y\otimes Z$ or $Z\otimes \unit$, abbreviated by omitting `$\otimes$' as  $XYZ$ and $Z\unit$.
We will denote the $2^n \times 2^n$ identity matrix $I_2^{\otimes n}$ as $I_{n}$ and write $\mathbb{I}$ when $n$ is clear from the context.
For this work, a useful representation of a Pauli gate is a binary tuple $(x, z) \in \{0, 1\}^2$, defined as $I\simeq (0, 0), Z\simeq (0, 1), X \simeq (1, 0), Y \simeq (1, 1)$.
We extend this to Pauli words using a \emph{reverse} order than is common~\cite{aaronson2004improved}, defining the binary vector of $P$ as $\revbin{P_1 \otimes P_2 \otimes \dots P_n} = (x_n, x_{n-1}, \dots, x_1, z_{n}, z_{n-1}, \dots, z_1)$.
For example, $\revbin{YZ} = (0, 1, 1, 1)$.

The $n$-qubit Pauli group is the closure of $n$-qubit Pauli words under matrix multiplication.
A stabilizer group of a state $\ket{\phi}$ is the commutative subgroup $\stab{\ket{\phi}} = \{\lambda P \mid \lambda P \ket{\phi} = \ket{\phi}, P\in\pauli_n, \lambda\in\{\pm 1\}\}$ of the Pauli group.
A generating set for a stabilizer group is a set $S$ of signed Pauli words whose closure under multiplication, $\langle S\rangle$, is that stabilizer group.
An example is $\{\unit, -Z\} = \langle -Z\rangle$, with generating set $\{-Z\}$ as $(-Z)^2 = \unit$.
For a signed Pauli word $(-1)^s \cdot P$ with $s\in \{0, 1\}$, we define $\revbin{sP}$ as the concatenation of $\revbin{P}$ and the bit $s$, e.g., $\revbin{-YZ} = (0, 1, 1, 1, 1)$.
Stabilizer generating sets $\{g_1, g_2, \dots, g_k\}$ can be written as binary matrix where the $r$-th row is the binary vector $\revbin{g_r}$.
Such matrices can be brought in row-echelon form (REF) using Gaussian elimination, an elementary linear algebra algorithm, by a slight adjustment of vector addition (see the `rowsum' algorithm in~\cite{aaronson2004improved}).
We will say that a stabilizer generating set is in REF if its binary representation is.

\subsection{Decision diagrams for quantum computing}
\label{sec:prelims-dds}
Decision diagrams (DDs)~\cite{bryant1992symbolic} are a data structure that 
heuristically compresses the representation of (pseudo-)Boolean functions, 
i.e., functions of the form $\{0, 1\}^n \to \mathbb{D}$ for some domain $\mathbb{D}$. 
Quantum states $\ket{\psi}$ and gates $U$ on $n$ qubits can be viewed as pseudo-Boolean functions $\psi : \{0,1\}^n \to \mathbb{C}$ and $U : \{0,1\}^{2n} \to \mathbb{C}$,
where $\psi(\vec x)$ returns the element of the vector $\ket{\psi}$ at index $\vec x$,
which can be encoded in decision diagrams.
A crucial aspect of DDs is that the functions they represent can be manipulated 
\emph{without uncompressing the data structure}~\cite{bryant1992symbolic}.
It is also well known how to do matrix-vector multiplication (i.e., quantum circuit simulation in our context) in the DD representation~\cite{mcmillan1993symbolic}.

In the context of quantum computing, various types of decision diagrams have been used, including
binary decision diagrams (BDDs)~\cite{tsai2021bit,chen2025sliqsim},
multi-terminal BDDs (MTBDDs)~\cite{viamontes2004high,samoladas2008improved},
quantum multiple-valued DDs (QMDDs)~\cite{miller2006qmdd,zulehner2018advanced,brand2025qsylvan},
(weighted) context-free-language ordered BDDs (CFLOBDDs)~\cite{sistla2023symbolic,sistla2024weighted},
and local invertible map DDs (LIMDDs)~\cite{vinkhuijzen2023limdd,vinkhuijzen2023efficient,hong2025limtdd}. 
Because we focus on the effect of canonicity of LIMDDs (\cref{sec:prelims-limdds}), we limit our scope to (canonical and non-canonical) LIMDDs as well as QMDDs, of which LIMDDs are a generalization.

\begin{wrapfigure}[5]{r}{1.9cm}
    \vspace{-9mm}
    \centering
    \begin{tikzpicture}[inner sep=1.3pt, auto, thick]
    	\node[] (0) [] {};
    	\node[main] (1) [node distance=6mm,below of=0] {\phantom{$x$}};
    	\node[main,label=left:{$\psi_0$}] (2) [below left  = 4.5mm and -.2mm of 1] {\phantom{$x$}};
    	\node[main,label=right:{$\psi_1$}] (3) [below right = 4.5mm and -.2mm of 1] {\phantom{$x$}};
    	\draw[e1] (0) to node [] {~$\psi$} (1) ;
    	\draw[e0] (1) to node [above left] {$A$} (2);
    	\draw[e1] (1) to node [] {$B$} (3);
    \end{tikzpicture}
    \vspace*{-7.5mm}
    \caption{}
    \label{fig:three_nodes}
\end{wrapfigure}
Structurally, a DD that encodes a vector $\ket{\psi}$ is a directed acyclic graph whose internal nodes have two outgoing edges. An example of a QMDD or LIMDD node is shown in \cref{fig:three_nodes}.
Such a node represents the Shannon decomposition,
$\ket{\psi} = \ket{0} \otimes A\cdot \ket{\psi_{0}} + \ket{1} \otimes B\cdot \ket{\psi_{1}}$,
where the edge labels $A, B$ are complex numbers for QMDDs, or local invertible maps (LIMs), which are Kronecker products of $2\times 2$ invertible matrices, for LIMDDs.
For a node $v$, we use $\low{v}$ and $\high{v}$ to refer to nodes corresponding to subvectors $\ket{\psi_0}$ and $\ket{\psi_1}$ respectively.
A DD achieves succinctness by recognizing identical or equivalent subvectors and storing only one node for each unique subvector.
This recognition of equivalent subvectors is done efficiently by storing DD nodes in a canonical form (details in \cref{sec:prelims-limdds}). While in general every node has one or two child nodes, there is a single unique terminal node that represents the scalar 1.

QMDDs merge nodes which represent subvectors $\ket{\phi}, \ket{\psi}$ when these satisfy $\ket{\phi} = M \cdot \ket{\psi}$, where $M$ is a complex number. For LIMDDs, this works likewise, but $M$ is a LIM. This broader merging, based on Kronecker products of local ($2 \times 2$) matrices, allows for even stronger compression of the representation. The original work on LIMDDs~\cite{vinkhuijzen2023limdd} mainly focuses on Pauli-LIMDDs, where LIMs are restricted to Pauli-LIMs: a complex number times a Pauli word.

\begin{figure}[t]
    \centering
    \setlength{\tabcolsep}{6pt}
    \begin{tabular}{cccc}
        \scalebox{.98}{\begin{tikzpicture}
\node[] (vec) {
    \begin{minipage}{17mm}
    $\def\arraystretch{1.05}
    \begin{matrix*}[l]
        {000:}\\
        {001:}\\
        {010:}\\
        {011:}\\
        {100:}\\
        {101:}\\ 
        {110:}\\
        {111:}\\
    \end{matrix*}
    \begin{bmatrix*}[r]
        1 \\
        2\\
        3\\
        0\\
        1\\
        2\\ 
        \text{-}3\\
        0\\
    \end{bmatrix*}$
    \vspace{-.5mm}
    \end{minipage}%
};
\end{tikzpicture}} &
        \scalebox{1}{\begin{tikzpicture}[auto, thick,node distance=1.cm,inner sep=1.5pt]
    \newcommand\dist{.5cm}
    \newcommand\sep{.7cm}
    \newcommand\sepp{0.15cm}
    \newcommand\seppp{-.15cm}
    \node[] (0) [] {};
    \node[main] (x0)  [node distance=.65cm,below of=0] {$x_3$};
    \node[main] (x11) [below left  = \dist and \sep of x0] {$x_2$};
    \node[main] (x12) [below right = \dist and \sep of x0] {$x_2$};
    \node[main] (x21) [below left  = \dist and \sepp of x11] {$x_1$};
    \node[main] (x22) [below right = 1.1*\dist and \sepp of x11] {$x_1$};
    \node[main] (x23) [below left  = 1.1*\dist and \sepp of x12] {$x_1$};
    \node[main] (x24) [below right = 1.1*\dist and \sepp of x12] {$x_1$};
    \node[leaf] (l1)  [below left  = .9*\dist and \seppp of x21] {$1$};
    \node[leaf] (l2)  [below right = .9*\dist and \seppp of x21] {$2$};
    \node[leaf] (l3)  [below left  = .9*\dist and \seppp of x22] {$3$};
    \node[leaf] (l4)  [below right = .9*\dist and \seppp of x22] {$0$};
    \node[leaf] (l5)  [below left  = .9*\dist and \seppp of x23] {$1$};
    \node[leaf] (l6)  [below right = .9*\dist and \seppp of x23] {$2$};
    \node[leaf] (l7)  [below left  = .9*\dist and \seppp of x24] {$\text{-}3$};
    \node[leaf] (l8)  [below right = .9*\dist and \seppp of x24] {$0$};
    \draw[e1] (0) to (x0);
    \draw[e0] (x0) to (x11);
    \draw[e1] (x0) to (x12);
    \draw[e0] (x11) to (x21);
    \draw[e1] (x11) to (x22);
    \draw[e0] (x12) to (x23);
    \draw[e1] (x12) to (x24);
    \draw[e0] (x21) to (l1);
    \draw[e1] (x21) to (l2);
    \draw[e0] (x22) to (l3);
    \draw[e1] (x22) to (l4);
    \draw[e0] (x23) to (l5);
    \draw[e1] (x23) to (l6);
    \draw[e0] (x24) to (l7);
    \draw[e1] (x24) to (l8);
\end{tikzpicture}} &
        \scalebox{1}{\begin{tikzpicture}[auto, thick,node distance=1.cm,inner sep=1.5pt]
    \newcommand\dist{.5cm}
    \newcommand\sep{.7cm}
    \newcommand\sepp{0.15cm}
    \newcommand\seppp{-.15cm}
    \node[] (0) [] {};
    \node[main] (x0) [node distance=.65cm,below of=0] {$x_3$};
    \node[main] (x11) [below left = .9*\dist and 1.2*\sepp of x0] {$x_2$};
    \node[main] (x12) [below right = .9*\dist and 1.2*\sepp of x0] {$x_2$};
    \node[main] (x21) [below = \dist of x11] {$x_1$};
    \node[main] (x22) [below = \dist of x12] {$x_1$};
    \node[leaf] (l)   [below = 4.5*\dist of x0] {1};
    \draw[e1] (0) to (x0);
    \draw[e0] (x0) to (x11);
    \draw[e1] (x0) to (x12);
    \draw[e0] (x11) to (x21);
    \draw[e1] (x11) to node [pos=0,right] {~$3$} (x22);
    \draw[e0] (x12) to (x21);
    \draw[e1] (x12) to node [pos=.3,right] {$\text{-}3$} (x22);
    \draw[e0,bend right=30] (x21) to (l);
    \draw[e1,bend left=15] (x21) to node [right,shift=({-.5mm,1.5mm})] {$2$} (l);
    \draw[e0,bend right=15] (x22) to (l);
    \draw[e1,bend left=30] (x22) to node [right,pos=.4] {$0$} (l);
\end{tikzpicture}} &
        \scalebox{1}{\begin{tikzpicture}[auto, thick,node distance=1.cm,inner sep=1.5pt]
    \newcommand\dist{.5cm}
    \newcommand\sep{.7cm}
    \newcommand\sepp{0.15cm}
    \newcommand\seppp{-.15cm}
    \node[] (0) [] {};
    \node[main] (x0) [node distance=.65cm,below of=0] {$x_3$};
    \node[main] (x1) [below = .8*\dist of x0] {$x_2$};
    \node[main] (x21) [below = \dist of x11] {$x_1$};
    \node[main] (x22) [below = \dist of x12] {$x_1$};
    \node[leaf] (l)   [below = 4.5*\dist of x0] {1};
    \draw[e1] (0) to (x0);
    \draw[e1,bend left=30] (x0) to node [] {$1 Z \otimes I$} (x1);
    \draw[e0,bend right=30] (x0) to (x1);
    \draw[e0] (x1) to (x21);
    \draw[e1] (x1) to node [] {$3I$} (x22);
    \draw[e0,bend right=30] (x21) to (l);
    \draw[e1,bend left=15] (x21) to node [right,shift=({-.5mm,1.5mm})] {$2$} (l);
    \draw[e0,bend right=15] (x22) to (l);
    \draw[e1,bend left=30] (x22) to node [right,pos=.4] {$0$} (l);
\end{tikzpicture}} \\
        (a) vector &
        (b) decision tree &
        (c) QMDD &
        (d) LIMDD
    \end{tabular}
    \vspace{-2mm}
    \caption{In this figure, it is shown how a vector of size $2^n$ is represented as a decision tree of depth $n$. A QMDD (c) represents this vector more succinctly by factoring out scalars and storing these on edges. The LIMDD representation (d) compresses the data even further by putting Kronecker products of local ($2 \times 2$) matrices on the edges (d).  \vspace{-\baselineskip}}
    \label{fig:dd-examples}
\end{figure}

Examples of both data types are found in \cref{fig:dd-examples}.  Here, to simplify the notation, edges with value 1 or $\mathbb{I}$ are left unlabeled. Note that in QMDDs, state vector values can be retrieved by following dashed/solid edges according to the vector index and multiplying all values on the path. Similarly, retrieving values from a LIMDD is done by multiplying the subvectors with the LIMs on the encountered edges.
Observe that in this example, the LIMDD can merge the two $x_2$ nodes because $\begin{bmatrix} 1 & 2 & 3 & 0 \end{bmatrix}^\transpose = (Z \otimes I)\begin{bmatrix} 1 & 2 & \text{-}3 & 0 \end{bmatrix}^\transpose$.

We note that QMDDs can be seen as LIMDDs with LIMs restricted to $\lambda \unit_{2^n}$, with $\lambda \in \mathbb{C}$. Hence, LIMDDs are strictly more succinct than QMDDs.
A more succinct diagram directly benefits quantum-circuit simulation run time as well, as the run time of many DD operations scales with the number of nodes.

Pauli-LIMDDs, similar to the stabilizer formalism~\cite{aaronson2004improved} and unlike QMDDs, can efficiently simulate Clifford circuits (i.e., circuits composed of Clifford gates).
Moreover, it has been shown that there exist classes of circuits where LIMDDs are exponentially faster 
than both QMDDs and the extended stabilizer formalism for the universal Clifford+T gate set, such as W-state preparation~\cite{vinkhuijzen2023limdd}.
LIMDDs have also been shown to be exponentially faster on certain classes of circuits than other simulation methods, such as matrix product states (MPS) ~\cite{vinkhuijzen2023limdd,vinkhuijzen2026knowledge}.

\subsection{Canonicity for Pauli-LIMDDs}
\label{sec:prelims-limdds}
While simulating quantum circuits, the DD algorithms create new diagram nodes.
As mentioned in \cref{sec:prelims-dds}, DDs obtain their succinctness and resulting run time improvements for simulation by recognizing whether an equivalent node is already present in the diagram, so that the new node need not be added.
Although one could implement this by checking the equivalence of the newly created node with each existing node in the diagram, the common (and faster) approach is to bring each node into canonical form.
This reduces the equivalence check with existing nodes in the diagram to an identity check, which is typically implemented using a hash table for $\mathcal{O}(1)$ lookup in practice~\cite{bryant1992symbolic}.

To define a canonical form for LIMDDs, the authors of \cite{vinkhuijzen2023limdd} realized that two LIMDDs in general could represent Pauli-equivalent states, have all identical nodes merged, and still be different, due to the degrees of freedom listed below and visualized in \cref{fig:merge_rules}.

\begin{enumerate}[label=(\alph*)]
    \item Because $0\cdot \vec\psi = 0 \cdot \vec\varphi$, edges with a 0 label can in principle point to any node. This degree of freedom is made canonical by requiring $\low{v} = \high{v}$ whenever one or both has edge label 0.
    \item Because $\ket{0}\otimes\ket{\psi_0} + \ket{1}\otimes\ket{\psi_1} = (X \otimes I^{k}) \left(\ket{1}\otimes\ket{\psi_0} + \ket{0}\otimes\ket{\psi_1}\right)$, there is a degree of freedom regarding the order of $\low{v}$ and $\high{v}$ whenever $X$ is in the LIM group (which it is in Pauli-LIMs). This order is made canonical by requiring $\low{v} \beforeq \high{v}$ according to some total order over all nodes.
    \item Because $(\unit \otimes L) \left(\ket{0}\otimes\ket{\psi_0} + \ket{1}\otimes\ket{\psi_1}\right) = \ket{0}\otimes L\ket{\psi_0} + \ket{1}\otimes L\ket{\psi_1}$, any LIM $L$ can be factored out of the children's edge labels. This is brought into a canonical form by requiring that the label of a low edge pointing to a node representing an $n$-qubit state is $I_n$ (whenever it is not zero).
    \item (The \emph{high determinism} rule) Two nodes can have different high labels and still be Pauli-equivalent (i.e., equivalent up to a Pauli word). For example, given a node $v$ that represents $\ket{0}$,$\lnode{\unit_2}{v}{L}{v}$ represents the (unnormalized) state $\ket{0}\otimes \ket{0} + \ket{1} \otimes \ket{0}$ for both $L=\unit_2$ and $L=Z$.
The high label is brought into a canonical form by choosing the minimum high label according to a defined total order on Pauli-LIMs.
\end{enumerate}

A \emph{reduced} LIMDD is one that does not represent the vector $(0, 0, \dots, 0)^{\transpose}$ and adheres to the above rules as well as the \emph{merge rule}: it has no duplicate nodes.
It is semi-reduced if it satisfies all except high determinism (d).
It was proven in \cite{vinkhuijzen2023limdd} that the above rules eliminate all degrees of freedom: two reduced LIMDD have the same root node if and only if they represent equivalent states.

These rules thus establish a canonical form for nodes: given a newly created node $v$ and edge with label $L$ pointing to it, we want to find a \emph{reduced} node $v'$ and edge label $L'$ such that $v'$ satisfies the rules (a)-(d) above, and $L\ket{v} = L'\ket{v'}$, i.e., they represent the same quantum state.
(If we were performing quantum circuit simulation, we would next check if $v'$ is already present in the LIMDD.)

Given any node $(v,L)$, finding a semi-reduced $(v', L')$, i.e., that satisfies the first three rules (a)-(c), is straightforward and can be done in time $\mathcal{O}(n)$; see \cref{fig:merge_rules} for how to update $(v, L)$.
For Pauli-LIMDDs, for the final rule (d), high determinism, an $\mathcal{O}(n^3)$ algorithm exists \cite{vinkhuijzen2023limdd}, which we improve upon in \cref{sec:hd-algorithm}.

\tikzstyle{oval} = [state, ellipse, minimum size=4mm, inner sep=0.5mm, node distance=1cm]
\tikzset{every picture/.style={->,thick}}

\tikzstyle{leaf}=[draw, rectangle,minimum size=4.mm, inner sep=3pt]
\tikzstyle{var}=[circle,draw=black!70,solid,thick,minimum size=6mm]
\tikzstyle{bdd}=[regular polygon, regular polygon sides=3, draw=black!70,solid,thick,inner sep=0.5mm]
\tikzstyle{n}=[->,loosely dashed,thick]
\tikzstyle{p}=[->,solid,thick]
\tikzstyle{b}=[->,densely dashdotted,ultra thick]
\tikzset{every node/.style={initial text={}, inner sep=2pt, outer sep=0}}
\tikzstyle{lbl}=[draw,fill=white,inner sep=2pt, minimum size=0cm,line width=.5pt]

\begin{figure}[t!]
\centering
\footnotesize
\begin{NiceTabular}{|c|c|c|c|}
\hline
&&&\\[-1ex]
        
        
    
    
    
    
    
\hspace{0cm}
    \begin{tikzpicture}[->,>=stealth',shorten >=1pt,auto,node distance=1cm,
        thick, state/.style={circle,draw,minimum size=14pt},font=\scriptsize]
        
    \node[state](r) {$v$};
    \node[state](1a)[below = .7cm of r, xshift=-.29cm]{$u$};
    \node[state](1b)[below = .7cm of r, xshift= .29cm]{$w$};
    
    \node[above = .6cm of r] (x) {};

	\node[right = .1cm of r,yshift=-.0cm,rotate=-0] {$\rightsquigarrow$}; 
    
    \path[]
    (x)  edge[e1]         node[lbl,left,pos=.1] {$C$} (r)
    (r)  edge[e0]     node[left, pos=.3,lbl] {$0$} (1a)
    (r)  edge[e1]     node[right,pos=.3,lbl] {$B$} (1b)
    ;

    \node[state,right = .57cm  of r, inner sep = 0pt](r) {$v'$};
    \node[state](1a)[below = .7cm of r]{$w$};
    
    \node[above = .6cm of r] (x) {};
    
    \path[]
    (x)  edge[e1]         node[lbl,pos=.1] {$C$} (r)
    (r)  edge[e0=15, bend right = 15]     node[left, pos=.2,lbl] {$0$} (1a)
    (r)  edge[e1=15, bend left = 15]     node[right,pos=.2,lbl] {$B$} (1a)
    ;
    \end{tikzpicture}\hspace{0cm}~
&\hspace{-.2cm} 
~ \begin{tikzpicture}[->,>=stealth',shorten >=1pt,auto,node distance=1cm,
        thick, state/.style={circle,draw,minimum size=14pt},font=\scriptsize]
        
    \node[state](r) {$v$};
    \node[state](1a)[below = .7cm of r, xshift=-.29cm]{$w$};
    \node[state](1b)[below = .7cm of r, xshift= .29cm]{$u$};
    
    \node[above = .6cm of r] (x) {};

	\node[right = .2cm of r,yshift=-.0cm,rotate=-0] {$\rightsquigarrow$}; 
    
    \path[]
    (x)  edge[e1]         node[lbl,left,pos=.1] {$C$} (r)
    (r)  edge[e0]     node[left, pos=.3,lbl] {$B$} (1a)
    (r)  edge[e1]     node[right,pos=.3,lbl] {$A$} (1b)
    ;

    \node[state,right = .65cm  of r, inner sep = 0pt](r) {$v'$};
    \node[state](1a)[below = .7cm of r, xshift=-.29cm]{$u$};
    \node[state](1b)[below = .7cm of r, xshift= .29cm]{$w$};
    
    \node[above = .6cm of r] (x) {};
    
    \path[]
    (x)  edge[e1]         node[lbl,pos=.18,xshift=-.7cm] {$C \cdot (X \otimes I^{k})$} (r)
    (r)  edge[e0]     node[left, pos=.3,lbl] {$A$} (1a)
    (r)  edge[e1]     node[right,pos=.3,lbl] {$B$} (1b)
    ;
    \end{tikzpicture}\hspace{0cm}~
&\hspace{0cm}
    \tikz[->,>=stealth',shorten >=1pt,auto,node distance=1.5cm,
        thick, state/.style={circle,draw,minimum size=14pt},font=\scriptsize]{
    \node[state] (1) {$v$};
    \node (x) [above = .6cm of 1] {};
    \node[state] (1a) [below = .7cm of 1, xshift=-.29cm] {$u$};
    \node[state] (1b) [below = .7cm of 1, xshift= .29cm] {$w$};
    
    \path[]
    (x) edge[e1]     node[lbl,left,pos=.2] {$C$} (1)
    (1) edge[e0] node[lbl,above left] {$A$} (1a)
    (1) edge[e1] node[lbl,above right] {$B$} (1b)
    ;
    
	\node[right = .1cm of 1,yshift=-.0cm,rotate=-0] {$\rightsquigarrow$}; 
    
    \node[state,right = .7cm of 1]  (1){$v'$};
    \node (x) [above  = .6cm of 1] {};
    \node[state] (1a) [below = .7cm of 1, xshift=-.29cm] {$u$};
    \node [state](1b) [below = .7cm of 1, xshift= .29cm] {$w$};
    
    \path[]
    (x) edge[e1]       node[draw,lbl,xshift=-.8cm,pos=.3] {$C \cdot (\unit \otimes A)$} (1)
    (1) edge[e0]   node[above left,pos=.7] {} (1a)  %
    (1) edge[e1]   node[draw,above right, pos=.6, lbl,xshift=-.19cm] {$ A^{-1}  B$} (1b)
    ;
    }\hspace{0cm}~
&\hspace{0cm}
    \begin{tikzpicture}[->,>=stealth',shorten >=1pt,auto,node distance=1cm,
        thick, state/.style={circle,draw,minimum size=14pt},font=\scriptsize]
        
    \node[state](r) {$v$};
    \node[state](1a)[below = .7cm of r, xshift= .3cm]{$w$};
    \node[state](1b)[below = .7cm of r, xshift=-.3cm]{$u$};
    
	\node[right = .2cm of r,yshift=-.0cm,rotate=-0] {$\rightsquigarrow$}; 

    \node[above = .6cm of r] (x) {};
    
    \path[]
    (x)  edge[e1]         node[lbl,left,pos=.2] {$C$} (r)
    (r)  edge[e1]     node[lbl, pos=.5] {$A$} (1a)
    (r)  edge[e0]     node[right,pos=.5] {} (1b);

    \node[state,right = .7 of r](r) {$v'$};
    \node[state](1a)[below = .7cm of r, xshift= .3cm]{$w$};
    \node[state](1b)[below = .7cm of r, xshift=-.3cm]{$u$};
    
    \node[above = .6cm of r] (x) {};
    
    \path[]
    (x)  edge[e1]         node[lbl,xshift=0cm,pos=.2] {$C'$} (r)
    (r)  edge[e1]     node[lbl, pos=.5] {$A'$} (1a)
    (r)  edge[e0]     node[right,pos=.5] {} (1b);
    \end{tikzpicture}\hspace{0cm}~
\\[1ex]
\hline
(a) Zero edges & (b) Low precedence & (c) Low factoring  & \hspace{-.1cm}(d) High determinism\hspace{-.1cm} \\
	& ($\low{v} \beforeq \high{v}$) & with $u \beforeq w$  & $A'$=$\textsf{MinimalHighLabel}(v)$\\
\hline
\end{NiceTabular}
\caption{
Normalization rules that bring LIMDD nodes to their reduced form \cite{vinkhuijzen2023limdd}, assuming reduced children $u, w$.
Reproduced and adjusted from \cite[Fig.11]{vinkhuijzen2023limdd} (CC-BY 4.0). \vspace{-\baselineskip}}
\label{fig:merge_rules}
\end{figure}

\subsection{Related work: implementations}
\label{sec:prelims-related-work}

One implementation of these Pauli-LIMDDs is MQT-LIMDD ~\cite{vinkhuijzen2023efficient}, which is an extension of the QMDD implementation MQT-DDSIM~\cite{zulehner2018advanced}.
However, MQT-LIMDD implements canonicity partially, and does not achive minimal size LIMDDs in all cases.
Aside from this implementation, LimTDDs~\cite{hong2025limtdd} have extended LIMDDs to a more expressive group of LIMs, 
specifically all matrices generated under Kronecker and matrix products from $\{X, P(\tfrac{2\pi}{N})\}$ (and a complex scalar), 
where $N \in \mathbb{N}$ can be chosen freely. 
This XP group is a generalization of the one generated by the Pauli matrices, as setting $N=2$ gives $P(\pi) = Z$, $P(\pi)X = iY$, and $XX=I$
which yields Pauli-LIMs.
However, the implementation accompanying~\cite{hong2025limtdd} does, by design, not guarantee canonicity and thus possibly yields LIMDDs that are not minimal in size.

\section{Faster algorithm for High Determinism}
\label{sec:hd-algorithm}

Here, we derive a new algorithm for transforming a semi-reduced Pauli-LIMDD node into one that satisfies the high determinism rule (\cref{sec:prelims-limdds}) with the following advantages over the original algorithm of \cite{vinkhuijzen2023limdd}:
(a) For the case that the node has a single child ($n$ qubits, $m$ stabilizer generators), a run time of $\mathcal{O}(nm)$ instead of $\mathcal{O}(nm^2)$;
(b) For the case of two distinct children, the same worst-case $\mathcal{O}(n^3)$ run time but with less overhead.
Our advantage is gained by adjusting a well-known algorithm for computing intersections and unions of vector spaces.

\subsection{High determinism algorithm: problem statement}
\label{sec:hd-problem-statement}
Given $n$-qubit semi-reduced node $u := \lnode{I}{v_0}{\beta B}{v_1}$ with $\beta \in \mathbb{C}, B \in \pauli_n$, it was shown in \cite[Corollary 15]{vinkhuijzen2023limdd} that the node $u_\text{can} := \lnode{I}{v_0}{\alpha A}{v_1}$ ($\alpha \in \mathbb{C}, A \in \pauli_n$) is a valid choice for the reduced version of $u$, where
\begin{equation}
\label{eq:ucan}
\alpha A = \min_{\substack{s, x \in \{0, 1\},\\ g\in \stab{v_0},\hspace{0.075cm} h \in \stab{v_1}}} \{(-1)^s \cdot \beta^{(-1)^x} g B h \mid x \neq 1 \textnormal{ if } v_0 \neq v_1\}
.
\end{equation}
Indeed, $u$ and $u_\text{can}$ are Pauli-equivalent as~\cite[Theorem 14]{vinkhuijzen2023limdd}
\begin{equation}
\label{eq:xzgmin}
\ket{u_\text{can}} = (X \otimes \beta B)^x \cdot (Z^s \otimes g_\text{min}) \ket{u}
\end{equation}
where $g_\text{min} \in \stab{v_0}$ achieves the minimum in \cref{eq:ucan}.
Canonicity is enforced by the minimization, defined through the following total order~\cite{vinkhuijzen2023limdd}.
In the definition below, for a complex number $r\cdot e^{i\theta}$ in polar form, i.e. $r \in \mathbb{R}_{\geq 0}$ and $\theta \in [0, 2\pi)$ and $P$ a Pauli word, we denote the vector $\vec{re^{i\theta} P}$ as the concatenation of $\revbin{P}$ and the tuple $(r, \theta)$, for example $\vec{3\cdot e^{\pi/2} \cdot YZ} = (0, 1, 1, 1, 3, \pi/2)$.

\begin{definition}[Pauli-LIM total order]
\label{def:vectorization}
For $r, r' \in \mathbb{R}_{\geq 0}, \theta, \theta'\in [0, 2\pi)$ and $P, P'$ Pauli words on the same number of qubits, we say that $r e^{i\theta} P < r' e^{i \theta'} P'$ if $\vec{r e^{i \theta} P} < \vec{r' e^{i \theta'} P'}$ in lexicographic ordering from left to right.
\end{definition}

For example, $\vec{-\tfrac{1}{2}X} = \vec{\tfrac{1}{2} e^{i\pi} X} = (1, 0, \tfrac{1}{2}, \pi) > (0, 1, 2, 0) = \vec{2 e^{0} Z} = \vec{2 Z}$.

\begin{wrapfigure}[9]{r}{2.8cm}
    \centering
    \vspace{-8mm}
    \begin{tikzpicture}[inner sep=1.5pt, auto, thick] 
        \node[draw,circle] (1) {$v_{sem}$};
        \node[draw,circle] (2) [below left=1.1cm and  .5mm of 1] {$v_0$};
        \node[draw,circle] (3) [below right=1.1cm and .5mm of 1] {$v_1$};
        \draw[->,dashed] (1) to node [pos=.45,left] {} (2); 
        \draw[->] (1) to node [pos=.3,right,xshift=-4.5mm,fill=white] {$i\unit Y$} (3);
        \node[] (0) [node distance=9mm,above of=1] {};
        \draw[->] (0) to node [pos=.35,left] {$XYZ$} (1) ;
        \node[draw,circle] (4) [right=.5cm of 1] {$v_{red}$};
        \draw[->,dashed] (4) to node [pos=.25,above] {} (2); 
        \draw[->] (4) to node [pos=.45,right,xshift=-2mm,fill=white] {$iY\unit$} (3);
        \node[] (10) [node distance=9mm,above of=4] {};
        \draw[->] (10) to node [pos=.35,left] {$Y\unit X$} (4) ;
    \end{tikzpicture}
    \vspace*{-7mm}
    \caption{\vspace{-\baselineskip}}
    \label{fig:zassenhaus}
\end{wrapfigure}

\begin{example}
\label{ex:high-det-example}
Consider the semi-reduced node $v_{sem}$ in \cref{fig:zassenhaus} and suppose that $\stab{v_0} = \langle XX, ZZ\rangle = \{\unit\unit, XX, ZZ, -YY\}$ and $\stab{v_1} = \langle XX, YY\rangle = \{\unit\unit, XX, -ZZ, YY\}$.
By trying out all $4\times 4 \times 2 \times 2 = 64$ combinations of $g,  h, s, x$, we find that the minimum \cref{eq:ucan} is achieved by $(g_{\min}, h_{\min}, s, x) = (-YY, \unit\unit, 1, 0)$ at $\alpha A = (-1)^1 \cdot i^{(-1)^0} \left(-Y\unit \right) = i Y \unit$, yielding reduced node $v_{red}$.
We update the incoming edge label from $XYZ$ to $XYZ \cdot (X \otimes iY\unit)^0 \cdot (Z^1 \otimes -YY) = XYZ \cdot -ZYY = Y\unit X$ to keep the same represented quantum state.
\end{example}

The goal of the high determinism algorithm is to evaluate eq.~\eqref{eq:ucan}, i.e. to find $\alpha A$ given $\beta B$ and $\stab{v_0},\hspace{0.1cm} \stab{v_1}$.
In general, there are multiple tuples $(g, h, s, x)$ that achieve the minimum of \cref{eq:ucan}.
However we show in \cref{sec:gbh-proof} that any two such tuples $(g, h, s, x),\hspace{0.1cm} (g', h', s', x')$ satisfy $g\cdot B\cdot h = \pm g'\cdot B \cdot h'$.
Because the Pauli-LIM ordering is lexicographic, i.e., the scalars only matter if the Pauli words are equal, if we found $g,h$ that is off by a factor of $-1$ of the true minimum, we correct for this by choosing $s$ accordingly to get the right factor $(-1)^s$.
This is why we may \emph{greedily} perform the minimization in eq.~\eqref{eq:ucan}: We first find the Pauli word $A$ as $\min_{g \in \stab{v_0}, h \in \stab{v_0}} \signless{gBh}$, where $\signless{L}$ is the Pauli word of Pauli LIM $L$, e.g. $\signless{+XY} = XY$ and $\signless{-iZX} = ZX$.
We give our algorithm for doing so in \cref{sec:hd-alg-min-high-label}.
Next, we find the minimal scalar $\alpha$ by trying the four combinations of $x, s \in \{0, 1\}$.

\subsection{Finding the minimal Pauli word $A$}
\label{sec:hd-alg-min-high-label}
We here assume that we are given $n$-qubit stabilizer generating sets $S_0, S_1$ and an $n$-qubit Pauli word $B$.
We will describe an algorithm to compute
\begin{equation}
\label{eq:minimal-objective}
(g_\text{min}, h_\text{min})
= \argmin_{g\in \langle S_0\rangle, h \in \langle S_1 \rangle} \signless{gBh}
.
\end{equation}
In the context of the high determinism rule, $\langle S_0\rangle$ ($\langle S_1 \rangle$) is the stabilizer group of child node $v_0$ ($v_1$), but our algorithm is agnostic to the interpretation of $S_0$ and $S_1$, except for whether $v_0$ and $v_1$ are identical reduced nodes.

The algorithm from \cite{vinkhuijzen2023limdd} to solve eq.~\eqref{eq:minimal-objective} consists of many individual steps, which include the calculation of intersections of stabilizer groups and cosets of stabilizer groups.
We here sidestep having to calculate these by adjusting one part of the algorithm, \emph{the Zassenhaus procedure}, to directly solve eq.~\eqref{eq:minimal-objective}.
Below, in \cref{sec:stabilizerfindingalgo}, we will also re-use the adjusted algorithm's output to derive a faster and simpler Stabilizer Finding algorithm.

Given two sets of basis vectors $V, W \subset \mathbb{R}^m$ for some positive integer $m$, the Zassenhaus algorithm~\cite{luks1997some} finds bases for the vector spaces $V \cap W$ and $\langle V \cup W \rangle$ (where $\langle.\rangle$ denotes closure under vector addition). 
To do so, it defines a matrix of $2m$ columns, which we call the Zassenhaus matrix, with rows $\{v \circ v \mid v \in V\} \cup \{w\circ 000\dots0 \mid w \in W\}$, where $\circ$ denotes vector concatenation and `$000\dots0$' is the zero vector. 
The algorithm then brings the \emph{left register} of the matrix (the leftmost $m$ columns) into row-echelon form.
To read off a basis for the union $\langle V \cup W \rangle$ after this, one takes all vectors in the left register that are not equal to the zero vector, while a basis for the intersection $V \cap W$ is the set of all (nonzero) vectors in the right register where the left vectors are zero.

We here expand the algorithm to \emph{signed} Pauli words (recall that stabilizers can only have scalar $\pm 1$) to create the Zassenhaus matrix $M$ consisting of \emph{three} registers: $M = \{ \signless{g} \circ g \circ (+\unit_{2^n}) \mid g \in S_0\} \cup \{\signless{h}\circ (+\unit_{2^n}) \circ h \mid h \in S_1\}$.
Thus, the left register contains length-$2n$ binary vectors, the middle only elements of $\langle S_0\rangle$, and the right register only elements of $\langle S_1\rangle$.
We next bring the left register in REF by performing Gaussian elimination on $M$, where addition ($+$) of rows is defined using Pauli-LIM multiplication ($\cdot$) as $(\signless{k}, g, h) + (\signless{k'}, g', h') = (\signless{k \cdot k'}, g' \cdot g, h \cdot h')$.
Because $\langle S_0\rangle$ and $\langle S_1 \rangle$ are stabilizer groups, whose elements commute, this addition is commutative, as Gaussian elimination requires.
As a consequence, the resulting matrix $M'$ has the following properties:
\begin{enumerate}[label=(\roman*)]
\item{
\label{item:first-property}
A basis of minimal size for the binary vector space $\langle \signless{g \cdot h} \mid g \in \langle S_0 \rangle, h \in \langle S_1 \rangle \rangle$ is found as the set of vectors in the left register that are nonzero.
This follows directly from the fact that the left register contains (the Pauli words of) all elements of both $S_0$ and $S_1$
}
\item{
\label{item:second-property}
Each left vector is the product of the middle and right Pauli stabilizers: each row $(k, g, h)$ satisfies $k = \signless{g \cdot h}$.
This is an invariant of the Gaussian elimination algorithm and is hence straightforwardly proven by induction on the algorithm's steps.
}
\end{enumerate}
To find $g_\text{min}$ and $h_\text{min}$ from \cref{eq:xzgmin}, we now perform the standard algorithm for finding the minimal vector between a vector space $V$ and a hyperplane $\{v_0 + v \mid v \in V\}$ for a given vector $v_0$~\cite{vinkhuijzen2023limdd}.
First, append a row $\signless{B} \circ (+\unit_{2^n}) \circ (+\unit_{2^n})$ at the bottom of $M$.
Next, add rows downward to minimize $\signless{B}$ in the left register: iterating a column index $c$ from $1$ to $2n$, if the value of the bottom row $b$ at $c$ contains the value $1$ and there is a row $r$ which contains a leading entry $1$ at column $c$, then replace $b$ by $b+r$.
Finally, the bottom row's left register will contain $\signless{g_{\textnormal{min}} \cdot B \cdot h_{\textnormal{min}}}$ by property~\ref{item:first-property} above, while by property~\ref{item:second-property}, $g_\text{min}$ and $h_\text{min}$ are found in the appended row's middle and right registers, respectively.

\begin{example}
\label{ex:zassenhaus-minimization}
Let $\stab{v_0},\hspace{0.1cm} \stab{v_1},\hspace{0.1cm} B=\unit Y$ be as in Ex.~\ref{ex:high-det-example}. We first calculate $\revbin{B} = \revbin{\unit Y} =  1010, \revbin{XX} = 1100, \revbin{ZZ} = 0011$ and $\revbin{YY} = 1111$.

Now, (a) bringing the left register of the Zassenhaus matrix in row-echelon form, after which we (b) minimize the appended row by adding rows from above depending on the column position of their leading entries, yields
\[
\begin{pNiceArray}{c|c|c}
1100 & +XX & +\unit \unit\\
0011 & +ZZ & +\unit \unit\\
1100 & +\unit\unit & +XX\\
1111 & +\unit\unit & +YY\\
\hdashline
1010 & +\unit\unit & +\unit\unit
\end{pNiceArray}
\stackrel{\textnormal{(a)}}{\rightarrow}
\begin{pNiceArray}{c|c|c}
1100 & +XX & +\unit \unit\\
0011 & +ZZ & +\unit \unit\\
0000 & +XX & +XX\\
0000 & -YY & +YY\\
\hdashline
1010 & +\unit\unit & +\unit\unit
\end{pNiceArray}
\stackrel{\textnormal{(b)}}{\rightarrow}
\begin{pNiceArray}{c|c|c}
1100 & +XX & +\unit \unit\\
0011 & +ZZ & +\unit \unit\\
0000 & +XX & +XX\\
0000 & -YY & +YY\\
\hdashline
0101 & -YY & +\unit\unit 
\end{pNiceArray}
\]
The appended row shows $g_{\textnormal{min}}$=$-YY$ and $h_{\textnormal{min}}$=$ +\unit\unit$, recovering Ex.~\ref{ex:high-det-example}'s result.
Note that one could add the third and fourth row too while arriving at the same minimal string `0101' in the bottom row, resulting in the different values $g_{\textnormal{min}} = +XX$ and $h_{\textnormal{min}} = -ZZ$. However, such different values will lead to the same minimal high label $g_{\textnormal{min}} \cdot B \cdot h_{\textnormal{min}}$ modulo potentially a minus sign, which is accounted for separately (see \cref{sec:hd-problem-statement} and App.~\ref{sec:gbh-proof}).
\end{example}

The run time of the algorithm above is dominated by the Zassenhaus procedure, taking time $\mathcal{O}(n \cdot (|S_0| + |S_1|)^2)$, i.e., at most $\mathcal{O}(n^3)$, identically to the original algorithm in~\cite{vinkhuijzen2023limdd}.
However, in the case that $u$ only has a single child node, i.e. $v_0 = v_1$ and hence $\langle S_0 \rangle = \langle S_1\rangle$, the algorithm for finding $A$ is simpler: note that $\min_{g, h \in \langle S_0\rangle} \signless{gBh} = 
\min_{g, h \in \langle S_0\rangle} \signless{\pm B gh}
=\min_{h \in \langle S_0\rangle} \signless{Bh}$ hence we only need to minimize $B$ by multiplying with elements of $\langle S_0\rangle$, i.e. performing the minimal-vector algorithm only, an $\mathcal{O}(n\cdot |S_0|)$ procedure.

As $|S_0| \leq n$, this improves the original algorithm's $\mathcal{O}(n^3)$ run time to $\mathcal{O}(n^2)$.

\subsection{Finding the stabilizer group of a reduced node}
\label{sec:stabilizerfindingalgo}

So far, we have shown how to reduce a node by minimizing~\cref{eq:ucan}, provided that we have generator sets for the stabilizer groups of the node's children, in row-echelon form (REF).
Here, we give a recursive algorithm to find the latter: find an REF generating set for the stabilizer group of a reduced node.
Compared to the original algorithm~\cite{vinkhuijzen2023limdd}, our algorithm has less overhead as it avoids computing group intersections multiple times by re-using the Zassenhaus result from the high determinism algorithm in \cref{sec:hd-alg-min-high-label}, and has a speedup from $\mathcal{O}(n^3)$ to $\mathcal{O}(n^2)$ for reduced-nodes with a single child.

\begin{example}
\label{ex:stabilizer-finding}
Consider the node $v:=\lnode{I}{v_0}{-i X}{v_0}$ with single child node $v_0$ whose state is $\ket{v_0} = \ket{0}$.
The stabilizer group $\stab{v_0} = \{\unit_2, Z\}$ which is generated as $\langle Z\rangle$.
The goal is to find a generating set for $\stab{v}$ given $v_0$'s generating set $\{Z\}$.
Using the fact that $\ket{v} = \ket{0}\otimes \ket{0} - i\ket{1}\otimes \ket{1}$, a brute-force search over all $2\times 4^2$ signed $2$-qubit Pauli words reveals $\stab{v} = \{\unit\unit, -YX, ZZ, -XY\}$.
\end{example}

The base case for our recursive algorithm is the leaf node, which has a stabilizer group consisting only of the number $1$.
For the inductive case, suppose that we have an $n$-qubit reduced node $\lnode{I}{\phi_0}{\alpha A}{\phi_1}$ where $\alpha \in \mathbb{C}, A \in \pauli_{n-1}$ and $\phi_0, \phi_1$ are reduced nodes.
Denote $S_0$ ($S_1$) for an REF (row-echelon form) generating set of node $\phi_0$ ($\phi_1$).
Our goal is to find all $\lambda P$ such that $\lambda P \ket{\phi} = \ket{\phi}$, where $\lambda \in \{\pm 1\}$ and $P \in \pauli_n$.

We first expand $\lambda P = P_1 \otimes P'$ where $P_1 \in \{\unit, X, Y, Z\}$ (absorbing $\lambda$ into the Pauli-LIM $P'$) and distinguish cases $P_1 \in \{I, Z\}$ and $P_1\in \{X, Y\}$.
For the case $P_1 \in \{I, Z\}$ we write $P_1 = \begin{smallpmatrix}1&0\\0&x\end{smallpmatrix}$ for $x \in \{1, -1\}$.
Because $\ket{\phi} = \ket{0} \otimes \ket{\phi_0} + \ket{1}\otimes \alpha A \ket{\phi_1}$, in this case $P_1 \otimes P' \ket{\phi} = \ket{\phi}$ is equivalent to
\begin{equation}
\label{eq:iz-derivation}
	P' \ket{\phi_0} = \ket{\phi_0} \quad \text{and} \quad
	P' \cdot \alpha A \ket{\phi_1} = x \alpha A \ket{\phi_1}
\end{equation}
If $\alpha = 0$, then the solutions $P'$ are precisely all stabilisers of $\ket{\phi_0}$.
That is, the stabilizers of $\phi$ are $\{P_1 \otimes s \mid P_1 \in \{I, Z\}, s \in S_0\}$; we return $\{Z \otimes I_{n-1}\} \cup \{I \otimes s \mid s \in S_0\}$ as generating set.
If $\alpha \neq 0$, we rewrite the latter equation of \cref{eq:iz-derivation} as $x \sigma(A, P') P' \ket{\phi_1} = \ket{\phi_1}$ using the fact that $x = 1/x$ and where $\sigma(\cdot, \cdot) = 1$ if the arguments commute and $-1$ otherwise.
So, a generating set for $\phi$'s stabilizers is
\begin{equation}
\label{eq:zcase}
\left\{\begin{pmatrix}1&0\\0&\sigma(g, A) \cdot \gamma_g\end{pmatrix} \otimes g
	\mid g \in S'\right\}
\end{equation}
where $S'$ is an REF generating set for the set of stabilisers $g$ of $\ket{\phi_0}$ such that $g \ket{\phi_1} = \gamma_g \cdot \ket{\phi_1}$ for some $\gamma_g \in \{\pm 1\}$.

If $v_0 = v_1$, then $S' = S_0$, otherwise $S$ is found as $S' = \{g \mid (000\dots0, g, \gamma \cdot g) \in M'\}$, where $M'$ is the Zassenhaus matrix of $S_0$ and $S_1$ (see \cref{sec:hd-alg-min-high-label}).

\begin{example}
Let $\stab{v_0},\hspace{0.1cm} \stab{v_1}$ be as in Ex.~\ref{ex:high-det-example}.
We already found their Zassenhaus matrix in Ex.~\ref{ex:zassenhaus-minimization}.
The relevant rows are
\[
\begin{pNiceArray}{c|c|c}
0000 & +XX & +XX\\
0000 & -YY & +YY\\
\end{pNiceArray}
=
\begin{pNiceArray}{c|c|c}
0000 & +XX & \gamma_{+XX} \cdot (+XX)\\
0000 & -YY & \gamma_{-YY} \cdot (+YY)\\
\end{pNiceArray}
\]
from which we read off: $S' = \{+XX, -YY\}$ and $\gamma_{+XX} = +1$ and $\gamma_{-YY} = -1$.

\end{example}

Second, for the case $P_1 \in \{X, Y\}$ we write $P_1 = \begin{smallpmatrix}0&y^*\\y&0\end{smallpmatrix}$ with $y \in \{1, i\}$.
Then $P_1 \otimes P' \ket{\phi} = \ket{\phi}$ is expanded as
\begin{equation}
\label{eq:y-derivation}
	y^*\alpha P' A \ket{\phi_1} = \ket{\phi_0} \quad \text{and} \quad
	y P'\ket{\phi_0} = \alpha A \ket{\phi_1}
\end{equation}
If $\alpha = 0$, then there are no solutions because that would imply that $\ket{\phi_0}=\ket{\phi_1}=0$, which is not a reduced node by definition~\cite{vinkhuijzen2023limdd}.

If $\alpha \neq 0$, then $\ket{\phi_0}$ and $\ket{\phi_1}$ are Pauli-isomorphic.
But by assumption, $\phi_0$ and $\phi_1$ are reduced, hence $\phi_0 = \phi_1$.

Using the fact and multiplying both sides of the latter equation with $y^* P'$, \cref{eq:y-derivation} yields the same equation twice
\begin{equation}
\label{eq:y-derivation-two}
y^* \alpha P' A \ket{\phi_0} = \ket{\phi_0}
\end{equation}
where we used that $y^* = 1/y$ and that $P'$ and $A$ are their own inverse.
Since all Pauli-LIMs which are stabilizers have scalar either $1$ or $-1$~\cite{nielsen2000quantum}, while $y\in \{1, i\}$ and the multiplication of $P'$ and $A$ yields a scalar $\in \{\pm 1\}$, it follows that a solution in this case is only possible if $\alpha = \pm y$, which in particular implies that $\alpha \in \{\pm 1, \pm i\}$.
Provided that that is the case, a solution to \cref{eq:y-derivation-two} is $P' = y/\alpha A$.
That is, the second case yields
\begin{equation}
\label{eq:xcase}
\begin{pmatrix}0&\alpha^* \\ \alpha &0\end{pmatrix} \otimes A
\end{equation}
if $\alpha \in \{\pm 1, \pm i\}$, and no stabilizer generators otherwise.
Since any generating set of a stabilizer group can be brought back to a form where only a single generator has $P_1 \in \{X, Y\}$ as the first Pauli entry~\cite{audenaert2005entanglement}, it suffices to return this one only.

\begin{example}
Let $v, v_0$ and $\alpha=-i, A=X$ be as in \cref{ex:stabilizer-finding}.
We calculate a generating set for $\stab{v}$ using the algorithm described above.
As $\stab{v_0}$ is generated by $Z$ and $\sigma(Z, A) = \sigma(Z, X) = -1$, the first case yields 
$Z\otimes Z$.
For the second case, as $\alpha=-i$, we find $y = i$, so a stabilizer of $v$ is 
$y/\alpha \cdot Y \otimes A = -YX$.
Thus $\stab{v} = \langle ZZ, -YX\rangle$, which expands to $\stab{v} = \{\unit \unit, ZZ, -YX, -XY\}$, recovering the full stabilizer group given in \cref{ex:stabilizer-finding}.

\end{example}

The above algorithm returns a generating set for a node's stabilizer group.
To return an REF generating set, we call Gaussian elimination on the output in case $\alpha \neq 0$.
In the case of distinct children with $\alpha \neq 0$, the algorithm's run time is dominated by the Zassenhaus procedure, taking $\mathcal{O}(n \cdot (|S_0| + |S_1|)^2) $ time.
In the case of a single child with $\alpha \neq 0$, the procedure takes $\mathcal{O}(n\cdot |S_0|)$ time to calculate $\sigma(s, A)$ for all $s \in S_0$.
As $S_0$ is already in REF by assumption and our REF definition has the most significant qubit at the right (see $\revbin{}$ definition in \cref{sec:prelims}), the set in \cref{eq:zcase} is in REF already.
Hence, Gaussian elimination only updates the single row appended in \cref{eq:xcase}, taking $\mathcal{O}(n \cdot |S_0|)$ time.
If $\alpha = 0$, then the output is already in REF, as the last row is $\revbin{Z\otimes \unit_{n-1}} = (0000\dots001)$, i.e., it is majorized by $\revbin{\unit \otimes s}$ for any stabilizer $s \neq \unit_{2^n}$.
As $|S_0|, |S_1| \leq n$, the run times are at most $\mathcal{O}(n^3), O(n^2)$ and $\mathcal{O}(n^2)$.

\section{Empirical evaluation\label{sec:evaluation}}

        We implemented our high determinism algorithm (\cref{sec:hd-algorithm}) as an open-source C library called {\paulilim}~\cite{repopaulilim}, and we have created a from-scratch LIMDD implementation in C/C++ named {\qoldder}~\cite{repoqoldder}. Together, these allow us to investigate the effects of canonicity on the simulation of quantum circuits.
        The implementations are separated to improve modularity and extensibility for future work.
        Below, we highlight two important design decisions of QolDDer.

        As is typical for a DD implementation, to efficiently recognize equivalent nodes, {\qoldder}'s nodes are stored in a hash table after bringing them into a canonical form~\cite{bryant1992symbolic}. We resolve hash collisions through chaining. Each node is assigned a unique ID based on its index in the table. These IDs are then used as canonical representations of the children when hashing a parent node.

        While most of the information that uniquely defines a node (qubit number, children nodes,
        Pauli words of edge labels) is discrete and can be hashed directly; scalars need to be treated separately. 
        Floating-point computations are subject to small rounding errors that can prevent nodes from merging (e.g., \texttt{1/sqrt(2)} yields 0.707...5, while \texttt{cos(pi/4)} yields 0.707...6).
        Most DD implementations that include floating points handle this by setting a threshold $\delta$ and consider two scalars $\alpha$ and $\alpha'$ equivalent when $|\alpha - \alpha'| \leq \delta$~\cite{sanner2005affine,somenzi2015cudd,zulehner2019efficiently}.
        In our case, with $\alpha,\alpha' \in \mathbb{C}$, we consider them
        equivalent when $|\alpha_{\text{real}} - \alpha'_{\text{real}}| \leq \delta$ and $|\alpha_{\text{imag}} - \alpha'_{\text{imag}}| \leq \delta$. 
        We leave $\delta$ configurable, and set the default $\delta = 10^{-13}$; for 64-bit floats, $\delta$ should generally be greater than $10^{-15}$~\cite{brand2025numerical}.
        To find $\delta$-close scalars, we store them in a search tree.
        Additionally, in our implementation, we make a minor change to the LIMDD normalization rules that has been shown to mitigate numerical errors in QMDDs~\cite[Fig.2c]{brand2025qsylvan} while preserving LIMDD canonicity. Details are in App.~\ref{app:max_precedence}.

        The correctness of a LIMDD implementation has two components: representing the correct state vector at all times and representing it using reduced nodes only.
        We rigorously tested {\qoldder} on the first aspect by verifying that {\qoldder}'s output, converted to a state vector, is identical to the output of MQT-DDSIM and LimTDD on the entire {\mqtbench} benchmark set~\cite{quetschlich2023mqtbench} (see \cref{sec:tools-and-benchmarks}) up to 10 qubits, and for the first 10 vector entries up to 20 qubits. 
        We tested the second aspect by verifying {\paulilim}'s correctness against a simple brute-force implementation that solves \cref{eq:ucan} on $10^5$ randomly-generated 4 to 8-qubit instances of high edge labels and stabilizer groups. 
    
    \subsection{Empirical results}
    \label{sec:tools-and-benchmarks}
       
        We investigated the effect of canonicity on simulation run time using two benchmark sets.
        First, we generate a set of random Clifford gates, {\randcliff}, as Pauli-LIMDDs are known to outperform QMDDs on such circuits~\cite{vinkhuijzen2023limdd,vinkhuijzen2026knowledge}. It consists of randomly generated $n$-qubit Clifford circuits with $100n$ gates, ranging from $n=10$ to $30$ qubits, each with $10$ instances. Gates are drawn uniformly from the Clifford generating set $\{H,P(\tfrac{\pi}{2}),\mathit{CX}\}$.
        Next, {\mqtbench}~\cite{quetschlich2023mqtbench} is an existing benchmark set comprised of a broad range of circuit types.

        \looseness=-1
        All benchmarks were run on an AMD Ryzen 7 9800X3D CPU, with 64 GB of available memory. 
        Each run uses a 10-minute timeout and a 16 GB memory limit. Reproducible benchmarks are available online~\cite{zenodobenchmarks}.

        \begin{figure}[h!]
            \centering
            \input{figures/no_hd_randcliff.pgf}
            \vspace{-2mm}
            \caption{
                 Run times and node counts of canonical and semi-canonical versions of our Pauli-LIMDD simulator {\qoldder} on the {\randclif} benchmark set. Note the log-log scales, on which a monomial function ($x \mapsto x^k$) shows up as a straight line. The dotted grey line visualizes break-even, meaning equal run times or node counts.
            }
            \label{fig:qoldder-no-hd-randcliff}
        \end{figure}

        \looseness=-1
        By turning the use of high determinism on or off, we investigate the effect of simulation on {\randclif} with reduced versus semi-reduced Pauli-LIMDDs (see \cref{sec:prelims-limdds}).
        \cref{fig:qoldder-no-hd-randcliff} shows a run time comparison.
        We observe that full canonicity results in significantly faster run times. 
        On this log-log scale, the data points follow a trend that appears linear, or even sub-linear, indicating a polynomial run time improvement or better.
        To verify that the effect of canonical simulation is due to a reduction in node count, we show the number of nodes in the DD of the output state of the random Clifford circuits in \cref{fig:qoldder-no-hd-randcliff}b.
        Here we see that, indeed, the node count is significantly larger for semi-reduced Pauli-LIMDDs.
                
        \begin{figure}[h!]
            \centering
            \input{figures/results.pgf}
            \vspace{-2mm}
            \caption{
                Run time comparisons of our implementation ({\qoldder}) to the QMDD implementation (MQT-DDSIM) and the two LIMDD implementations (MQT-LIMDD and LimTDD) on log-log scales. As is shown by the legends, sub-figures (a-c) show the run time comparison for the {\randclif} benchmarks, while sub-figures (d-f) show these for the {\mqtbench} datasets. In the latter, the 27 circuit types have been highlighted according to different categories based on the observed trends.
            }
            \label{fig:results}
         
        \end{figure}

        Next, in \cref{fig:results}, we compare {\qoldder}'s run time against the QMDD simulator MQT-DDSIM and two LIMDD simulators MQT-LIMDD and LimTDD (see also \cref{sec:prelims-related-work}). Based on the observed sublinear scaling in the log-log plot of \cref{fig:results}, we see that {\qoldder} is exponentially faster than QMDDs on {\randcliff}.
        This is as expected, as LIMDDs can represent the output of Clifford circuits in polynomial space, but QMDDs cannot succinctly represent highly entangled states~\cite{vinkhuijzen2026knowledge}, which random Clifford circuits produce~\cite{li2019measurement}.
        We also observe exponential speedup compared to MQT-LIMDD.
        We expect that this is due to the fact that their canonicity is only partially implemented; in theory, our algorithm should only show polynomial improvement (see \cref{sec:hd-algorithm}).
        In \cref{fig:results}c, we observe that {\qoldder} is a constant 25 times faster than LimTDD on some instances, while on larger instances LimTDD times out while {\qoldder} does not.
        This difference is likely due to {\qoldder}'s canonicity, as we observed (not depicted) that LimTDD has much larger node numbers on this benchmark set.

        \looseness=-1
        In \cref{fig:results}d-f, we observe that the advantages of canonicity do not hold on all circuits in {\mqtbench}.
        In \cref{fig:results}d and f, we grouped the benchmarks into sets A = \{Quantum Phase Estimation (QPE) exact, QPE inexact\}, B  = \{Quantum Neural Networks (QNN), Quantum random walks, Grover\}, and C = \{all other\} to highlight the observed trends. In \cref{fig:results}e, only the Greenberger–Horne–Zeilinger state preparation (GHZ) and Deutsch-Josza (DJ) circuits have been highlighted.

        In \cref{fig:results}d, we observe that {\qoldder} significantly outperforms MQT-DDSIM on QPE circuits.
        However, we also observe that the latter is faster on QNN, Quantum random walks, and Grover algorithms (``Set B'').
        We speculate this is because the MQT-Bench benchmark set contains many circuits for which LIMs do not achieve more merging than QMDDs.
        In that case, the LIMs in the DD only add overhead. Nonetheless, we observe roughly a linear scaling on Sets B and C, indicating that {\qoldder} is only slower by a constant factor of $20$.

        In \cref{fig:results}e, we see that {\qoldder}'s performance is similar to MQT-LIMDD on \mqtbench, with two types of outliers. 
        First, on GHZ and DJ circuits, {\qoldder} shows improved scaling over MQT-LIMDD, although due to their sub-1-second run times, we draw no conclusions from these.
        Second, on various circuits, MQT-LIMDD times out or terminates with an error, indicating possible issues in the implementation. In \cref{fig:results}f, we observe a constant-factor speedup of {\qoldder} over LimTDD on QPE (Set A; $30\times$ faster).
        For others, this factor decreases as circuit size increases. 
        We speculate this is due to the larger LIM group that LimTDD has (XP, a supergroup of the Pauli group), enabling more merging, despite not being canonical.

        \begin{figure}[h!]
            \centering
            \input{figures/no_hd_mqtbench.pgf}
            \vspace{-2mm}
            \caption{
                Run times and node counts of reduced and semi-reduced versions of our Pauli-LIMDD simulator on the {\mqtbench} benchmark set.
                Note the log-log axes scales.
            }
            \label{fig:qoldder-no-hd-mqtbench}
        \end{figure}

        To investigate the hypothesis that LIMDD canonicity does not speed up run times for {\mqtbench}, we compared semi-reduced and reduced Pauli-LIMDDs in \cref{fig:qoldder-no-hd-mqtbench}. 
        We see little performance difference (\cref{fig:qoldder-no-hd-mqtbench}a), which is explained by a limited gain in succinctness (\cref{fig:qoldder-no-hd-mqtbench}b).

\section{Discussion}

We presented a novel, more efficient algorithm to bring LIMDD nodes into a canonical form, and implemented this algorithm as part of our LIMDD implementation {\qoldder} (both open source~\cite{repoqoldder,repopaulilim}).
In our empirical evaluation, we have compared both run times and node counts for quantum circuit simulation with canonical and non-canonical LIMDDs. Our results corroborate that LIMDDs can achieve exponential advantages over QMDDs. Also, they show that, without canonicity, LIMDDs cannot always realize these advantages. 
On some benchmarks, however, ensuring canonicity is unnecessary overhead.

In the future, we will extend our canonicity algorithm beyond the Pauli group.
Specifically, we expect that our algorithm is extendable to the XP group, which LimTDD uses~\cite{hong2025limtdd}.
We will also apply LIMDDs to a wider range of circuits, such as the universal Clifford+$T$ circuits, for which LIMDDs have scaling guarantees~\cite{quist2026exact}. Moreover, we will look into applications and tasks beyond quantum circuit simulation, such as quantum circuit equivalence checking, a task on which QMDDs have shown success ~\cite{burgholzer2020advanced,brand2025qsylvan}.
In the classical domain, we will investigate the use of LIMDDs for linear-algebra tasks such as stochastic reasoning and Markov-chain analysis, investigating whether relevant tasks possess structure that the strong merge rule of LIMDDs could explicitly make use of. 

\begin{credits}
\subsubsection{\ackname}
SB is supported by the NWO Quantum Technology program (project number NGF.1582.22.035).
AQ acknowledges the support received though the NWO NWA‑ORC program, project Divide \& Quantum (NWA.1389.20.241).
TC is supported by an NWO Veni grant (VI.Veni.232.381). 
\end{credits}

\bibliographystyle{splncs04}
\bibliography{bibs}
\appendix
\section{Proof of High Determinism minimizer form}
\label{sec:gbh-proof}

Fix the number of qubits $n \geq 0$.
Let $G_0, G_1$ be $n$-qubit stabilizer groups and let $B$ be an $n$-qubit Pauli word.
Suppose that both $(g, h, s, x), (g', h', s', x') \in G_0 \times G_1 \times \{0, 1\} \times \{0, 1\}$ achieve the minimum \cref{eq:ucan}.
We here show that $g\cdot B \cdot h = \pm g' \cdot B \cdot h'$.

First, we note that by definition of the total order on PauliLIMs in \cref{def:vectorization}, $\signless{gBh} = \signless{g'Bh'}$.
Hence $gBh = \alpha \cdot g'Bh'$ for some $\alpha \in \mathbb{C}$.
Now we use the fact that stabilizers are their own inverse (because Pauli words square to identity, and the scalars of stabilizers are $\pm 1$), we reorder factors to find that
\[
(g' g) B (h h') = \alpha B
\]
hence
\[
(g' g) B (h h') B = \alpha \unit_{2^n}
.
\]
As Pauli words either commute or anticommute, we find
\[
\pm (g' g) B B (h h') = \alpha \unit_{2^n}
\]
so
\begin{equation}
\label{eq:pmgprimeg}
\pm (g'g) (h h') = \alpha \unit_{2^n} \text{.}
\end{equation}
As $g'\cdot g$ and $h\cdot h'$ are stabilizers, their scalars are $\pm 1$.
Moreover, following \cref{eq:pmgprimeg} we see that $g'g$ and $hh'$ multiply to a multiple of $\unit_{2^n}$, which is only possible for PauliLIMs if $\signless{g' \cdot g} = \signless{h \cdot h'}$.
Hence, we may write $g'\cdot g = \pm W$ and $h \cdot h' = \pm W$ for some Pauli word $W$.
Plugging back into \cref{eq:pmgprimeg}, we find
\[
\pm (\pm W) \cdot (\pm W) = \alpha \unit_{2^n}
.
\]
Hence $\alpha = \pm 1$, which finishes our proof.

\section{Scalar normalization}
\label{app:max_precedence}
Given a node with low edge label $A$ and a high edge label $B$, the low-factoring rule (see \cref{fig:merge_rules}) normalizes the edge labels by multiplying the root edge label with $A$, setting the high edge label to $A^{-1} B$, and leaving the low edge an identity label.
This means that, for $\alpha$ and $\beta$ being the scalar components of $A$ and $B$ respectively, the normalized scalars will be $\alpha' = \alpha/\alpha = 1$ and $\beta' = \beta/\alpha$. However, for QMDDs, it has been shown that the issues that arise from floating-point imprecision can be mitigated by using a different normalization strategy~\cite{brand2025qsylvan}. Specifically, instead of dividing $\alpha$ and $\beta$ always by $\alpha$, dividing by $\max(\alpha, \beta)$ yields fewer numerical errors.

In LIMDDs, this can be implemented by either modifying the low factoring rule (\cref{fig:merge_rules}c) or by modifying the low precedence rule (\cref{fig:merge_rules}b).
We opt for the latter as it is simpler in implementation.
The modified low precedence rule, which we shall call the \emph{max precedence rule}, is given in \cref{def:max-precedence}. In \cref{fig:merge_rules}c, the low factoring rule has ``$u \beforeq w$'' in the caption. This condition would be replaced by ``max precedence holds''.

\begin{definition}[Max precedence]
        \label{def:max-precedence}
    A LIMDD node $\lnode{\unit_n}{u}{\beta \cdot B}{w}$, where $\beta \in \mathbb{C}$ and $B$ a Pauli word, satisfies \emph{max precedence} if at least one of the following three conditions holds:
    \begin{align*}
        \begin{cases}
            u = w\\
            |\beta| < 1\\
            u \before w \textnormal{ and } |\beta| = 1
        \end{cases}
    \end{align*}
    Here, $\beforeq$ denotes a total order over all nodes.
\end{definition}

We now give an alternative definition of a LIMDD to be reduced, where the \emph{Low precedence} rule (Fig.~\ref{fig:merge_rules}b) is replaced by \emph{Max precedence}.

\begin{definition}
\label{def:max-reduced}
    We say that a LIMDD node is \emph{max-reduced} if it satisfies the following normalization rules: \emph{merge rule}, \emph{Zero edges}, \emph{Low factoring}, \emph{High determinism} (\cref{sec:prelims-limdds}) as well as \emph{Max precedence}.
\end{definition}

What remains is to show that max-reduced LIMDD is canonical, which is the equivalent of the proof that reduced LIMDD is canonical in \cite[Theorem 9]{vinkhuijzen2023limdd}.

\begin{theorem}[Max-reduced Pauli-LIMDDs are canonical]
    For each $n$-qubit quantum state $\ket{\phi}$, there exists a unique \emph{max-reduced} Pauli-LIMDD $L$ with root node $v_L$ such that $\ket{v_L}$ is Pauli-equivalent to $\ket{\phi}$.
\end{theorem}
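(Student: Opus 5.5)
We mirror the structure of the canonicity proof of \cite[Theorem 9]{vinkhuijzen2023limdd}, arguing by induction on the number of qubits $n$. We only re-prove the places where the \emph{low precedence} rule was used, and swap in \emph{max precedence}. The base case $n=0$ is immediate: the only max-reduced diagram is the terminal node representing $1$, and every nonzero scalar is Pauli-equivalent to $1$.

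For the inductive step we prove existence and uniqueness separately.

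\textbf{Existence.} Write $\ket{\phi} = \ket{0}\otimes\ket{\phi_0} + \ket{1}\otimes\ket{\phi_1}$. By the induction hypothesis, each nonzero $\ket{\phi_j}$ is represented by a max-reduced node $w_j$ with some incoming Pauli-LIM.

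\begin{itemize}
\item If one half is zero, the zero-edge rule settles the form. Then $u=w$, so max precedence holds trivially.
\item If both halves are nonzero and $w_0=w_1$, max precedence again holds via $u=w$. Low factoring and then high determinism (via \cref{eq:ucan}) give a valid node.
\item If $w_0\neq w_1$, we may swap children using $X\otimes I$, as in rule (b). This exchanges the roles of the two scalars: after low factoring the high scalar is $\beta$ in one orientation and $1/\beta$ in the other. If $|\beta|\neq 1$, exactly one orientation has $|\beta|<1$. If $|\beta|=1$, we pick the orientation with $u\before w$.
\end{itemize}

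In the $w_0\neq w_1$ case we then apply high determinism. By \cref{eq:ucan} with $x=0$ forced (since $v_0\neq v_1$), the high scalar only changes by a sign, so $|\beta|$ is preserved. High determinism therefore does not destroy max precedence. Merging duplicate nodes yields a max-reduced diagram.

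\textbf{Uniqueness.} Suppose two max-reduced nodes $v=\lnode{I}{u}{\beta B}{w}$ and $v'=\lnode{I}{u'}{\beta' B'}{w'}$ satisfy $\ket{v}=P\ket{v'}$ for some Pauli-LIM $P=\lambda P_1\otimes P'$. The plan is to show $v=v'$.

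First we split on $P_1$. If $P_1\in\{I,Z\}$, the halves map to the halves directly. If $P_1\in\{X,Y\}$, the halves are swapped. In both cases, by the induction hypothesis, the children are Pauli-equivalent pairwise, and since they are max-reduced they are identical nodes. So $\{u,w\}=\{u',w'\}$.

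\begin{itemize}
\item \emph{Same order} ($u=u'$, $w=w'$). The original argument goes through verbatim: low factoring and \cite[Corollary 15]{vinkhuijzen2023limdd} force the high labels to coincide via the minimization in \cref{eq:ucan}.
\item \emph{Swapped order} ($u=w'\neq w=u'$). This is the main obstacle, and the only case genuinely different from \cite{vinkhuijzen2023limdd}. We must show it contradicts max precedence. Comparing the low components gives $\ket{u}=c\,Q\ket{w'}$ for some scalar $c$ and Pauli word $Q$. Comparing the high components gives $\beta B\ket{w}=c\,Q'\ket{u'}$. Combining these with $w=u'$, the scalar bookkeeping (every Pauli word has unit operator norm, and the scalars of stabilizers are $\pm1$) should yield $\beta\beta'=\pm1$ or $\pm i$ up to stabilizer phases, hence $|\beta|\,|\beta'|=1$.
\end{itemize}

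Once $|\beta|\,|\beta'|=1$ is established, max precedence for both nodes gives a contradiction in every sub-case:
\begin{itemize}
\item if $|\beta|<1$, then $|\beta'|>1$, which violates max precedence for $v'$ (it has $u'\neq w'$);
\item if $|\beta|=1$, then $|\beta'|=1$, and max precedence requires both $u\before w$ and $w\before u$, which is impossible.
\end{itemize}

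The delicate part is carefully tracking the scalar $c$ through the equations to obtain $|\beta\beta'|=1$. Once that holds, uniqueness follows as in \cite[Theorem 9]{vinkhuijzen2023limdd}.
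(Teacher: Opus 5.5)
Your outline follows the paper's proof. Existence is by bottom-up normalization: you build each node directly from its two halves instead of converting a reduced LIMDD, which is an inessential difference. Your check that high determinism with $x=0$ preserves $|\beta|$ is something the paper leaves implicit; note the scalar can change by a factor in $\{\pm1,\pm i\}$, not only by a sign, but only the modulus matters. Uniqueness is by reducing to the case of swapped, distinct children and establishing $|\beta|\,|\beta'|=1$. Your contradiction from the two remaining clauses of max precedence is exactly the paper's.

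The gap is that this relation, the only genuinely new step, is left as ``should yield'', and the equations you set up cannot produce it. You use the same scalar $c$ in both component equations, so $\beta'$ never appears. Taking norms would then force $|c|=1=|\beta|$ and say nothing about $\beta'$. Write $P=\lambda P_1\otimes P'$ with $P_1=\begin{smallpmatrix}0&z^*\\ z&0\end{smallpmatrix}$ and $z\in\{1,i\}$. Comparing $\ket{v}$ with $P\ket{v'}$ component-wise then gives
\[
\ket{u}=\lambda z^*\beta'\,P'B'\ket{w'},\qquad \beta B\ket{w}=\lambda z\,P'\ket{u'}.
\]
The point is that the high half of $v'$, which carries $\beta'$, is mapped onto the low half of $v$. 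Pauli words are unitary, and $u=w'$ and $w=u'$ are nonzero, so taking norms gives $|\lambda\beta'|=1$ and $|\beta|=|\lambda|$, hence $|\beta|\,|\beta'|=1$. Equivalently, as in the paper, $\lambda z^*\beta' P'B'$ fixes $\ket{u}$ and $\lambda z\beta^{-1}BP'$ fixes $\ket{w}$, so both have unit-modulus scalars. With this step filled in, your argument is complete and coincides with the paper's.
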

\begin{proof}
Similar to the proof of \cite[Theorem 9]{vinkhuijzen2023limdd}, our proof consists of two parts: first, to show that there exists a max-reduced LIMDD that represents $\ket{\phi}$ up to Pauli-LIMs. Next, uniqueness: to show that there is only one such max-reduced LIMDD.

\textbf{Existence.}
First, by algorithmically translating a reduced LIMDD into a max-reduced LIMDD, existence follows from the existence proof of reduced LIMDD in \cite[Theorem 9]{vinkhuijzen2023limdd}.
The algorithm is straightforward: given a reduced LIMDD $L_{red}$, consider recursively each node $\ledge{C}{v}$ with $\lnode{\alpha\cdot A}{u}{\beta \cdot B}{w}$ starting at the bottom layer, where $C$ is a Pauli-LIM, $\alpha,\beta \in \mathbb{C}$ and $A,B$ a Pauli word, and ensure that it will satisfy \emph{Max reducedness} (\cref{def:max-reduced}) as follows.
First, to apply low factoring, replace node $\ledge{C}{v}$ by $\ledge{C'}{v}$ and replace $v$ by $\lnode{\unit}{u}{\beta'B'}{w}$, where $C'=\alpha C (I\otimes A)$, $\beta'=\alpha^{-1}\beta$ and $B'=A^{-1}B$.
Second, to apply max precedence, if $\beta' \neq 0$ and $u\neq w$, but either $|\beta'|=1 \wedge w \before v$ or $|\beta'| > 1$ is the case, then replace $\ledge{C'}{v}$ by $\ledge{C'\cdot (X\otimes \beta B')}{v}$ and replace $v$ by $\lnode{\unit}{w}{\tfrac{1}{\beta'} \cdot B'}{u}$. After that, the high determinism algorithm is called on the new node $v$.
Then ensure that the \emph{Merge rule} holds in the entire LIMDD by performing the algorithm in \cite[Sec.~4.1]{vinkhuijzen2023limdd}.
Next, repeat the same procedure on the LIMDD layer above, and repeat this procedure until all layers of the LIMDD have been updated.
It is straightforward to check that the resulting LIMDD is max-reduced.

\textbf{Uniqueness.}
The uniqueness proof runs by induction on the number of qubits $n$.
The base case $n=1$ is given in \cite[Theorem 9]{vinkhuijzen2023limdd}.
The induction case starts identically to the uniqueness proof of \cite[Theorem 9]{vinkhuijzen2023limdd}: 
let $L$ and $M$ be max-reduced LIMDDs with root nodes $v_L, v_M$ and $\ket{v_L} \simeq \ket{\phi} \simeq \ket{v_M}$ with the following structure,
    	\begin{align}
    	    \lnode[v_L]{\unit}{v_L^0}{\beta_L B_L}{v_L^1} & & \lnode[v_M]{\unit}{v_M^0}{\beta_M B_M}{v_M^1}
    	\end{align}
where $\beta_L, \beta_M \in \mathbb{C}$ and $B_L, B_M$ are Pauli words.
Then there exists a Pauli-LIM $\lambda P_{top}\otimes P\ne 0$ that maps $\ket{v_L}$ to $\ket{v_M}$ where $\lambda \in \mathbb{C}$, $P_{top}$ a single-qubit Pauli matrix and $P$ an $(n-1)$-qubit Pauli word.
We set out to prove that $v_L = v_M$, i.e., they are represented by the same node.
Expanding the semantics of LIMDDs, we obtain:
        \begin{align}
        \label{eq:canonicity-equation-1}
       \lambda P_{top} \otimes P (\ket{0} \otimes \ket{v_L^0} + \ket{1} \otimes \beta_L B_L \ket{v_L^1})
        =
        \ket{0} \otimes \ket{v_M^0} + \ket{1} \otimes \beta_M B_M \ket{v_M^1}
        .
        \end{align}    
The proof of \cite[Theorem 9]{vinkhuijzen2023limdd} distinguishes the two cases $P_{top} \in \{\unit, Z\}$ and $P_{top} \in \{X,  Y\}$, with subcases $\beta_L, \beta_M \neq 0$ or at least one of $\beta_L, \beta_M$ equals $0$.
As in the proof of \cite[Theorem 9]{vinkhuijzen2023limdd}, \emph{Low precedence} is only invoked in the subcase $P_{top} \in \{X,  Y\}$ with $\beta_L, \beta_M \neq 0$; we only need to consider that case here.
For the remaining cases, the uniqueness proof in \cite[Theorem 9]{vinkhuijzen2023limdd} works for max-reduced too.

Considering the remaining case, we write $P_{top} = \left(\begin{matrix}0 & z^* \\ z & 0\end{matrix}\right)$ for $z \in \{1, i\}$, so that \cref{eq:canonicity-equation-1} gives:
\begin{equation}
\label{eq:max-precedence-proof}
            \lambda z P \ket{v_L^0} = \beta_M B_M \ket{v_M^1}
            \qquad\textnormal{and}\qquad
            \lambda z^*P \beta_L B_L \ket{v_L^1} = \ket{v_M^0}
            .
\end{equation}
As $\beta_L =0 $ or $\beta_M = 0$ was excluded by assumption, hence it follows from \cref{eq:max-precedence-proof} by the induction hypothesis that $v^0_L = v^1_M$ and $v^1_L = v^0_M$.
We now consider each of the three cases in the definition of \emph{Max precedence} (\cref{def:max-precedence}) individually.
The first case in \cref{def:max-precedence} (``$u=w$'') leads to $v_L$ and $v_M$ having the same child and only differing in high label, while we assumed they are Pauli-equivalent; by \emph{High determinism}, it follows that $v_L = v_M$, as desired.
We now show that the remaining two cases in \cref{def:max-precedence} cannot occur.
To do so, we first reorder terms of \cref{eq:max-precedence-proof} and our previously derived $v^0_L = v^1_M$ and $v^1_L = v^0_M$ show that $\alpha \lambda z^* \beta_L P B_L$ is a stabilizer of $\ket{v^0_L}$, and similarly that $\lambda z \beta_M^{-1} B_M P$ is a stabilizer of $\ket{v^1_L}$.
As stabilizers have a scalar of magnitude 1~\cite{nielsen2000quantum}, we find that $|\lambda z^* \beta_L| = 1$ and $|\lambda z \beta^{-1}| = 1$, resulting in $|\lambda| = |\beta_L| = 1/|\beta_M|$ as $|z| = |z^*| = 1$ (recall $z \in \{1, i\}$).
Hence $|\beta_M| \cdot |\beta_L| = 1$, so at least one of $|\beta_M| \geq 1$ or $|\beta_L| \geq 1$.
Hence, out of the remaining two cases of Max precedence in \cref{def:max-precedence}, it follows that at least one of $|\beta_M| = 1$ or $|\beta_L| =1$, i.e., the third case of \cref{def:max-precedence}.
It follows from $|\beta_M| \cdot |\beta_L| = 1$ that $|\beta_M| = |\beta_L| = 1$ hence both nodes $v_L$ and $v_M$ satisfy the third case of \cref{def:max-precedence}.
The ``$u\before w$'' clause in that case leads to $v^0_L \before v^1_L$ and $v^1_M \before v^0_M$, which contradicts our previously derived $v^0_L = v^1_L$ and $v^0_M = v^1_M$.
\end{proof}

\end{document}